\newtheorem{theorem}{Theorem}[section]
\newtheorem{prop}[theorem]{Proposition}
\newtheorem{corollary}[theorem]{Corollary}
\theoremstyle{definition}
\newtheorem{definition}[theorem]{Definition}
\newtheorem{example}[theorem]{Example}
\theoremstyle{remark}
\numberwithin{equation}{section}
\newcommand{\Det}{\operatorname{Det}}
\DeclareMathOperator{\tr}{tr}
\DeclareMathOperator{\can}{\nabla\mathbf{Det}} 
\begin{document}

\title{Differentiating the pseudo determinant}

\author{Andrew Holbrook}
\address{Department of Statistics, UC Irvine}
\email{aholbroo@uci.edu}
\thanks{This work was supported by the UC Irvine Dean's Dissertation Fellowship.  I thank Professor Oliver Knill for his encouragement and helpful discussion, and I am grateful to an anonymous reviewer for further helpful recommendations.}


\subjclass[2010]{Primary 15A15; Secondary 62H12}

\date{}


\keywords{Pseudo determinant, pseudo inverse, maximum likelihood; degenerate Gaussian, singular covariance}

\begin{abstract}
A class of derivatives is defined for the pseudo determinant $\Det(A)$ of a Hermitian matrix $A$.  This class is shown to be non-empty and to have a unique, canonical member $\can(A)=\Det(A)A^+$, where $A^+$ is the Moore-Penrose pseudo inverse.  The classic identity for the gradient of the determinant is thus reproduced. Examples are provided, including the maximum likelihood problem for the rank-deficient covariance matrix of the degenerate multivariate Gaussian distribution.
\end{abstract}

\maketitle

\section{Introduction}
We derive the class of derivatives of the pseudo determinant with respect to Hermitian matrices, placing an emphasis on understanding the forms taken by this class and their relationship to established results in linear algebra. In particular, care must be taken to address the discontinuous nature of the pseudo derivative. The contributions in this paper are primarily of a linear algebraic nature but are well motivated in fields of application.

The pseudo determinant arises in graph theory within Kirchoff's matrix tree theorem \cite{knill2014cauchy} and in statistics, in the definition of the degenerate Gaussian distribution. The degenerate Gaussian has been useful in image segmentation \cite{yang2008unsupervised}, communications \cite{castaneda2014estimation}, and as the asymptotic distribution for multinomial samples \cite{wasserman2013all}. Despite these appearances, knowledge of how to differentiate the distribution's density function is conspicuously absent from the literature, and---since differentiation is often essential for maximization---the lack of this knowledge is a plausible barrier to the distribution's wider use.

Specifically, to obtain the maximum likelihood (ML) estimator for the singular covariance matrix of the degenerate Gaussian, one must be able to calculate the derivative of the log likelihood and hence the pseudo determinant of the covariance. Although \cite{anderson1985maximum} firmly establishes the subject of ML estimation for multivariate Gaussians, the authors never directly address singular covariance estimation.  This problem is explored in Section 3.  In Section 2, the pseudo determinant is introduced, and its derivative with respect to Hermitian matrices is derived.

\section{The canonical derivative}
We begin by introducing the pseudo determinant both as a product of eigenvalues and as a limiting form.
\begin{definition}
The pseudo determinant $\Det$ of a square matrix $A$ is defined as the product of its non-zero eigenvalues. If a matrix has no non-zero eigenvalues, then we say $\Det(0)=1$.
\end{definition}
See \cite{knill2014cauchy} for an equivalent definition of the pseudo determinant in terms of the characteristic polynomial. In deriving its derivative, it will be useful to write the pseudo determinant as a limit. 
\begin{prop}
If $A$ is an $n\times n$ matrix of rank $k$, then $\Det(A)$ is the limit
\begin{align}
 \Det(A) =  \lim_{\delta \rightarrow 0} \frac{\det(A+\delta I)}{\delta^{n-k}} 
\end{align}
for $\det(\cdot)$ the regular determinant. 
\end{prop}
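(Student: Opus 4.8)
The plan is to reduce everything to a statement about eigenvalues. First I would recall that for any $n\times n$ matrix $A$ with eigenvalues $\lambda_1,\dots,\lambda_n$, listed with algebraic multiplicity, one has $\det(A+\delta I)=\prod_{i=1}^n(\lambda_i+\delta)$; this follows from the factorization of the characteristic polynomial $\det(tI-A)=\prod_{i=1}^n(t-\lambda_i)$ evaluated at $t=-\delta$, the overall sign $(-1)^n$ cancelling.

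Next I would order the eigenvalues so that $\lambda_1,\dots,\lambda_k\neq 0$ and $\lambda_{k+1}=\cdots=\lambda_n=0$. The only point that genuinely uses the hypotheses is the claim that the zero eigenvalue occurs with algebraic multiplicity exactly $n-k$; for a Hermitian $A$ --- the setting of the paper --- this is immediate, since for Hermitian matrices algebraic and geometric multiplicities agree and the geometric multiplicity of $0$ is $\dim\ker A=n-\operatorname{rank}A=n-k$. Granting this, the factorization becomes $\det(A+\delta I)=\delta^{\,n-k}\prod_{i=1}^{k}(\lambda_i+\delta)$.

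It then remains only to divide and take the limit: for every $\delta\neq 0$ one gets $\det(A+\delta I)/\delta^{\,n-k}=\prod_{i=1}^{k}(\lambda_i+\delta)$, a polynomial in $\delta$, whose value at $\delta=0$ is $\prod_{i=1}^{k}\lambda_i=\Det(A)$ by definition of the pseudo determinant. I would dispatch the edge case $k=0$ separately: the product is then empty and equals $1$, which agrees both with the convention $\Det(0)=1$ (for Hermitian $A$, $k=0$ forces $A=0$) and with the direct computation $\det(\delta I)/\delta^{n}=1$.

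I do not expect any serious obstacle here; the computation is essentially a one-liner once eigenvalues are in play. The single subtlety worth making explicit is the identification of $n-k$ with the algebraic multiplicity of the zero eigenvalue --- this is what makes the statement clean for Hermitian (or, more generally, diagonalizable) matrices, and for an arbitrary square matrix one should read ``rank $k$'' as shorthand for ``zero eigenvalue of algebraic multiplicity $n-k$,'' since a nilpotent example shows the bare rank hypothesis to be insufficient.
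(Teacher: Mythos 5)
Your argument is correct, and it takes a different route from the paper. You work directly with the characteristic polynomial: writing $\det(A+\delta I)=\prod_i(\lambda_i+\delta)$ and observing that the zero eigenvalue has algebraic multiplicity exactly $n-k$ (immediate for Hermitian $A$), the quotient $\det(A+\delta I)/\delta^{n-k}=\prod_{i=1}^{k}(\lambda_i+\delta)$ is a polynomial in $\delta$ whose value at $0$ is $\Det(A)$, so the limit is evaluated by continuity with no further machinery. The paper instead invokes the determinant identity $\det(X+ZYZ^*)=\det(Y^{-1}+Z^*X^{-1}Z)\det(Y)\det(X)$ with $X=\delta I_n$ and the spectral decomposition $A=U\Lambda U^*$ playing the role of $ZYZ^*$, reducing the limit to $\lim_{\delta\to 0}\det(\delta\Lambda^{-1}+I_k)=1$; this is the same underlying fact dressed in a block-determinant identity, and it implicitly uses the spectral decomposition just as you do, so the Hermitian hypothesis is equally essential there. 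Your version is more elementary (no inversion of $\Lambda$ or of $\delta I$ is needed, and no interchange of an identity valid only for invertible $X,Y$ with a limit), handles the $k=0$ case explicitly, and usefully flags that the proposition as literally stated for an arbitrary $n\times n$ matrix of rank $k$ is false --- a nilpotent $2\times 2$ Jordan block has rank $1$, $\Det=1$ by the paper's convention, yet $\det(A+\delta I)/\delta=\delta\to 0$ --- so ``rank $k$'' must be read as ``algebraic multiplicity of the zero eigenvalue equals $n-k$,'' which the Hermitian setting guarantees. The paper's identity-based computation buys little here beyond matching its limiting definition of the pseudo inverse in spirit; your observation about the needed hypothesis is the more valuable addition.
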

Whereas this result is known \cite{minka1998inferring}, we were unable to find its proof, so it is given here in the spirit of completeness.
\begin{proof}
We use the identity
\begin{align}
    \det(X + ZYZ^*) = \det(Y^{-1} + Z^*X^{-1}Z)\,\det(Y)\,\det(X) \, .
\end{align}
Replacing $X$ with $k\,I_n$ and letting $A=U\Lambda U^*=ZYZ^*$, we have
\begin{align}
    \lim_{\delta \rightarrow 0} \frac{\det(A+\delta I)}{\delta^{n-k}} &= \lim_{\delta \rightarrow 0} \frac{k^n}{k^{n-r}} \det(\Lambda^{-1}+\frac{1}{k}I_r)\,\det(\Lambda) \\ \nonumber
    &= \Det(A) \lim_{k \rightarrow 0} k^r \det(\Lambda^{-1}+\frac{1}{k}I_r) \\ \nonumber
    &= \Det(A) \lim_{k \rightarrow 0}  \det(k\Lambda^{-1}+I_r) \\ \nonumber
    &= \Det(A) \, .
\end{align}
\end{proof}
Next, we define the Moore-Penrose pseudo inverse \cite{golub1973differentiation}, an important object involved in the derivative of the pseudo determinant.
\begin{definition}
The pseudo inverse $A^+$ of a matrix $A$  is also  defined in terms of a limit:
\begin{align}
A^+ = \lim_{\delta \rightarrow 0} A^*(AA^* + \delta I)^{-1} = \lim_{\delta \rightarrow 0} (A^*A + \delta I)^{-1}A^* \, .
\end{align}
$A^+$ exists in general and is unique.  It may also be defined as the matrix satisfying all the following criteria:
\begin{enumerate}
    \item $AA^+A=A$
    \item $A^+AA^+ = A^+$
    \item $(AA^+)^*=AA^+$
    \item $(A^+A)^*=A^+A$
\end{enumerate}
For Hermitian matrices, the pseudo inverse is obtained by inverting the matrix eigenvalues.
\end{definition}
As is the case for the pseudo inverse \cite{golub1973differentiation}, the pseudo determinant is discontinuous. For an example, consider the two matrices
\begin{align}
    A = \left( \begin{array}{cc}
    1   & 0  \\
        0 & 0
    \end{array} \right), \quad \mbox{and} \quad \quad B_j = \left( \begin{array}{cc}
    0   & 0  \\
        0 & j
    \end{array} \right) \, .
\end{align}
Note that $\Det(A)=1$ and $\Det(A+B_j)=j$, but that
\begin{align}
    \lim_{j\rightarrow 0}\Det(A+B_j)=0\neq 1 = \Det(\lim_{j\rightarrow0} A+B_j) \, .
\end{align}
As one might gather from this example, the pseudo determinant is discontinuous between sets of matrices of differing ranks. This discontinuity will effect the way we define the derivative of the pseudo determinant. We now turn to deriving this derivative.

For matrix $A$ in the space of $n\times n$ matrices $M_{n\times n}$, the matrix derivative of a function $h:M_{n\times n} \rightarrow \mathbb{R}$ is given by the matrix $\nabla h(A)$ satisfying
\begin{align}\label{directderiv}
   \nabla_Bh(A)= \tr\big( B \nabla h(A)\big)= \lim_{\tau \rightarrow 0} \frac{h(A+\tau B)-h(A)}{\tau} \, 
\end{align}
for any matrix $B\in M_{n\times n}$,  where $\nabla_Bh(A)$ is the directional derivative. We use the directional derivative to define the derivative of the pseudo determinant, but, on account of the discontinuity of the pseudo determinant, we must restrict the directions $B$ in which the directional derivative is defined. For this reason, we may define the derivative at a point only in certain directions and must modify the common definition of the directional derivative.
\begin{definition}
(Definition 1) For a matrix $A$ in the space of Hermitian $n\times n$, rank $k$ matrices $M^k_{n\times n}$, the directional derivative $\nabla_B \Det(A)$ of the pseudo determinant $\Det:M_{n\times n} \rightarrow \mathbb{R}$ is defined in directions $B\in M_{n\times n}^k$ \emph{that share the same kernel as} $A$, i.e. for which $Ker(A)=Ker(B)$. Then the derivative $\nabla \Det(A)$ is given by any matrix satisfying
\begin{align}\label{directderiv2}
   \nabla_B \Det(A)= \tr\big( B \nabla \Det(A)\big)= \lim_{\tau \rightarrow 0} \frac{\Det(A+\tau B)-\Det(A)}{\tau} \, .
\end{align}
\end{definition}
Note that, according to this definition, $\nabla \Det(A)$ is not unique, since it can take on different values along the kernel of $B$. This non-uniqueness can also be seen using the following class equations for the class of derivatives $\nabla \Det(A)$   of the pseudo determinant at a matrix $A$. 
\begin{definition}\label{classEq}
(Definition 2) A derivative of the pseudo determinant at a point $A\in M^k_{n\times n}$ is any non-zero matrix $\nabla \Det(A) \in M^{k}_{n\times n}$ satisfying
\begin{align}
    A\, \nabla \Det(A)  &= A\, A^+ \Det(A)  \\
   \nabla \Det(A) \, A &=  A^+A\, \Det(A)\, .
\end{align}
\end{definition}
 We demonstrate that this is a natural definition using the facts that $A(A^2)^+=A^+$ and $(A^2)^+A=A^+$ for any Hermitian $A$ and assuming one may interchange limits:
\begin{align}
  A^{1/2}  \nabla \Det(A) &= A^{1/2} \nabla \, \lim_{\delta \rightarrow 0} \frac{\det(A+\delta I)}{\delta^{n-k}} \\ \nonumber
    &=  A^{1/2} \lim_{\delta \rightarrow 0} \frac{1}{\delta^{n-k}}\, \nabla  \det(A+\delta I) \\ \nonumber
    &=  \Det(A)\, \lim_{\delta \rightarrow 0}  A^{1/2} (A+\delta I)^{-1} \\ \nonumber
    &=  \Det(A)\,   (A^{1/2})^+ \\ \nonumber
    &=  \Det(A)\,   A^{1/2}A^+  \, .
\end{align}
Multiplying both sides by $A^{1/2}$ and rearranging gives the first class equation.  The derivation of the second equation is symmetric. 
We illustrate the preceding definitions---and that they do not define unique derivatives---with a few examples. 
\begin{example}
Consider the $2\times 2$ matrix
\begin{align}
    A = \left( \begin{array}{cc}
    a    & 0  \\
        0 & 0
    \end{array} \right) \, .
\end{align}
It is clear that $\Det(A)=a$ and $A^+$ is obtained by taking the reciprocal of the first element of $A$. The above result renders
\begin{align}
    A\nabla \Det(A) = a\,AA^+ = \left( \begin{array}{cc}
    a   & 0  \\
        0 & 0
    \end{array} \right)= a\, A^+A= A\nabla \Det(A)  \, .
\end{align}
Note that multiple matrices solve this equation.  Two examples are the identity and the matrix $A/a$.
\end{example}
\begin{example}
Now consider the  $2\times 2$ matrix pair
\begin{align}
    A = \left( \begin{array}{cc}
    1   & 1  \\
        1 & 1
    \end{array} \right), \, \quad A^+ = \left( \begin{array}{cc}
    .25   & .25  \\
        .25 & .25
    \end{array} \right) \, .
\end{align}
One can check that $\Det(A)=2$. Then it follows from the result that
\begin{align}\label{Ex2}
    A\nabla \Det(A) = 2\, AA^+ = 2\, \frac{1}{2} A = A = \dots = \nabla \Det(A) A \, .
\end{align}
Again, multiple matrices satisfy Equation \eqref{Ex2}: take for example
\begin{align}
     \left( \begin{array}{cc}
    1    & 0  \\
        0 & 1
    \end{array} \right)  \quad \mbox{and} \quad 
    \left( \begin{array}{cc}
    .5   & .5  \\
        .5 & .5
    \end{array} \right) \, .
\end{align}
\end{example}
It turns out that the matrix $A$ in the class equations of Definition \eqref{classEq} may be replaced by any Hermitian $B$ such that $Ker(B)=Ker(A)$.  This is easily shown using the fact that $BA^+A=B=BAA^+=B=A^+AB=B=AA^+B$ for any such $B$.
\begin{prop}
The derivative of the pseudo determinant is any matrix $\nabla \Det(A)$ satisfying the equations
\begin{align}
    B\, \nabla \Det(A)  &= B\, A^+ \Det(A)  \\
   \nabla \Det(A) \, B &=  A^+B\, \Det(A)\, ,
\end{align}
for any matrix $B$ for which $Ker(B)=Ker(A)$.
\end{prop}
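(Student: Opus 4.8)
\emph{Proof proposal.} The plan is to upgrade the two class equations of Definition~\eqref{classEq} — which are stated with $A$ occupying the outer factor — to an arbitrary Hermitian $B$ with $Ker(B)=Ker(A)$, by inserting one of the idempotents $AA^+=A^+A$ and then absorbing it. The tool for this is exactly the identity chain recorded immediately before the statement, $BA^+A=B=BAA^+=B=A^+AB=B=AA^+B$, which holds because for Hermitian matrices the column space is the orthogonal complement of the kernel, so $Ker(B)=Ker(A)$ forces $A$ and $B$ to share a column space, on which $AA^+=A^+A$ acts as the orthogonal projector. I will combine these with the defining pseudo inverse identity $A^+AA^+=A^+$ (criterion~(2) of the pseudo inverse definition).

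For the forward direction, suppose $\nabla\Det(A)\in M^k_{n\times n}$ satisfies the equations of Definition~\eqref{classEq}, i.e. $A\,\nabla\Det(A)=AA^+\Det(A)$ and $\nabla\Det(A)\,A=A^+A\,\Det(A)$. Then, writing $B=BA^+A$, regrouping, applying the first class equation, and using $A^+AA^+=A^+$,
\[
B\,\nabla\Det(A)=(BA^+A)\,\nabla\Det(A)=BA^+\big(A\,\nabla\Det(A)\big)=BA^+AA^+\Det(A)=BA^+\Det(A).
\]
The second equation is symmetric: starting from $B=AA^+B$ and using $\nabla\Det(A)\,A=A^+A\,\Det(A)$ together with $A^+AA^+=A^+$,
\[
\nabla\Det(A)\,B=\nabla\Det(A)(AA^+B)=\big(\nabla\Det(A)\,A\big)A^+B=A^+AA^+B\,\Det(A)=A^+B\,\Det(A).
\]
Conversely, any matrix satisfying the two displayed equations for \emph{every} admissible $B$ satisfies them in particular for the choice $B=A$ (which trivially has the same kernel as $A$), which is precisely Definition~\eqref{classEq}; hence the two descriptions of $\nabla\Det(A)$ coincide, and the proposition follows.

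I do not anticipate a genuine obstacle here: the computation is a two-line manipulation once the projector identities are in hand. The only point requiring care is the justification of those identities — that the hypothesis $Ker(B)=Ker(A)$, for Hermitian $B$, really does give $BA^+A=B$ and $AA^+B=B$ — since this is what licenses the insertions $B=BA^+A$ and $B=AA^+B$ that drive both computations; I would either cite the identity chain already displayed in the text or give the one-line argument via equality of column spaces.
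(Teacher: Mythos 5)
Your proposal is correct and follows essentially the same route as the paper, which proves the proposition by invoking exactly the identity chain $BA^+A=B=BAA^+=B=A^+AB=B=AA^+B$ for Hermitian $B$ with $Ker(B)=Ker(A)$; you simply spell out the two-line insertion-and-absorption computation (plus the trivial converse via $B=A$) that the paper leaves implicit.
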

This result may be combined with the directional derivative based definition of $\nabla \Det(A)$.
\begin{prop}
The derivative of the pseudo determinant is any matrix $\nabla \Det(A)$ satisfying the equations
\begin{align}\label{prop2}
    \tr\big(B\nabla \Det(A)\big) &= \Det(A) \tr(BA^+)  .
\end{align}
for any matrix $B$ for which $Ker(B)=Ker(A)$.
\end{prop}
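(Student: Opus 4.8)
The plan is to obtain \eqref{prop2} as the trace-level shadow of the matrix equations in the preceding proposition, and then to confirm that this trace identity is genuinely equivalent to the directional-derivative characterization of Definition 1 by evaluating the defining limit directly.

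For the first part I would simply invoke the preceding proposition: for every $B$ with $Ker(B)=Ker(A)$ one has $B\,\nabla\Det(A)=\Det(A)\,BA^{+}$. Taking the trace of both sides and using its linearity gives $\tr\big(B\,\nabla\Det(A)\big)=\Det(A)\,\tr(BA^{+})$, which is \eqref{prop2}; the non-zero and rank-$k$ requirements on $\nabla\Det(A)$ are inherited unchanged.

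For the second part I would verify that \eqref{prop2} is no weaker than Definition 1, i.e. that for every admissible direction $B$ one has $\lim_{\tau\to0}\tau^{-1}\big(\Det(A+\tau B)-\Det(A)\big)=\Det(A)\,\tr(BA^{+})$. Writing $K=Ker(A)=Ker(B)$ and using that $A$ and $B$ are Hermitian (as elements of $M^{k}_{n\times n}$ in Definition 1), the orthogonal splitting $\mathbb{C}^{n}=K\oplus K^{\perp}$ block-diagonalizes both matrices; the restriction $\hat A$ of $A$ to $K^{\perp}$ is invertible, so $\hat A+\tau\hat B$ stays invertible for small $\tau$, whence $Ker(A+\tau B)=K$ and $\Det(A+\tau B)=\det(\hat A+\tau\hat B)$. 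The classical Jacobi formula (the gradient of the ordinary determinant) then gives $\frac{d}{d\tau}\det(\hat A+\tau\hat B)\big|_{\tau=0}=\det(\hat A)\,\tr(\hat A^{-1}\hat B)$, and it only remains to identify $\det(\hat A)=\Det(A)$ and $\tr(\hat A^{-1}\hat B)=\tr(A^{+}B)$, the latter because for Hermitian $A$ the pseudo inverse $A^{+}$ vanishes on $K$ and equals $\hat A^{-1}$ on $K^{\perp}$ while $B$ also vanishes on $K$, so the $K$-block contributes nothing. Hence both the directional derivative and $\tr(B\,\nabla\Det(A))$ equal $\Det(A)\,\tr(BA^{+})$, and \eqref{prop2} characterizes exactly the same class. (Alternatively, one may differentiate the limit form of Proposition 2.1 under the limit sign, using $\frac{d}{d\tau}\det(A+\tau B+\delta I)=\det(A+\tau B+\delta I)\,\tr\big(B(A+\tau B+\delta I)^{-1}\big)$ at $\tau=0$ and then letting $\delta\to0$, noting that $B(A+\delta I)^{-1}\to BA^{+}$ because $B$ annihilates $K$; this route needs the interchange of the two limits, which, as elsewhere in the paper, I would take for granted.)

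I expect the only real friction to be the bookkeeping in the second part: checking that $A+\tau B$ retains the kernel of $A$ for small $\tau$ (so the rank, hence the exponent $n-k$ in Proposition 2.1, is stable) and matching $\tr(\hat A^{-1}\hat B)$ with $\tr(A^{+}B)$ through the block decomposition. The rest is the cited Jacobi formula together with a one-line trace manipulation.
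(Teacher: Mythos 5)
Your proposal is correct. The first half is exactly the paper's (implicit) argument: the paper offers no separate proof of this proposition, treating it as the trace-level consequence of the preceding class-equation proposition combined with the directional-derivative definition, which is precisely your one-line ``take the trace of $B\,\nabla\Det(A)=\Det(A)\,BA^{+}$'' step. Where you go beyond the paper is the second half: the paper's justification of the class equations themselves rests on an admitted interchange of limits (``assuming one may interchange limits''), whereas you verify directly that the defining limit of Definition 1 equals $\Det(A)\,\tr(BA^{+})$ by splitting $\mathbb{C}^{n}=Ker(A)\oplus Ker(A)^{\perp}$, noting that the Hermitian hypothesis block-diagonalizes both $A$ and $B$, that $\hat A+\tau\hat B$ stays invertible for small $\tau$ so the rank (and hence the pseudo determinant as an ordinary determinant of the nonsingular block) is stable, and then applying Jacobi's formula together with the identification $\tr(\hat A^{-1}\hat B)=\tr(A^{+}B)$. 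This buys genuine rigor: it establishes that the trace characterization and the directional-derivative definition pick out the same class without appealing to the unproved limit interchange, at the modest cost of the block-decomposition bookkeeping you flag (kernel stability of $A+\tau B$ and the matching of traces through the decomposition), all of which you handle correctly.
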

In practice, one may obtain the canonical element $\can(A)$ of class $\nabla \Det(A)$ directly from a corollary to the following Pythagorian theorem.
\begin{theorem}
(Knill 2014 \cite{knill2014cauchy}) For Hermitian $A$ of rank $k$,
\begin{align}
    \Det^2(A)=\Det(A^2) = \sum_P \det\nolimits^2 (A_P) \,
\end{align}
where $P$ indexes all $k\times k$ minors of $A$ satisfying $\det(A_P)\neq0$.
\end{theorem}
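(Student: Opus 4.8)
The plan is to split the statement into its two equalities and dispatch them in turn. The first, $\Det^2(A)=\Det(A^2)$, follows immediately from the spectral theorem: writing $A=U\Lambda U^*$ with $\Lambda=\operatorname{diag}(\lambda_1,\dots,\lambda_n)$, the matrix $A^2=U\Lambda^2U^*$ has eigenvalues $\lambda_i^2$, and since $\lambda_i^2=0$ exactly when $\lambda_i=0$, the nonzero eigenvalues of $A^2$ are precisely the squares of the nonzero eigenvalues of $A$. Hence $\Det(A^2)=\prod_{\lambda_i\neq0}\lambda_i^2=\big(\prod_{\lambda_i\neq0}\lambda_i\big)^2=\Det^2(A)$, and in particular $\operatorname{rank}(A^2)=\operatorname{rank}(A)=k$.

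For the second equality I would combine the characteristic polynomial with the classical Cauchy--Binet formula. Set $M=A^2=A^*A$ (using $A^*=A$). Since $M$ is Hermitian of rank $k$, its characteristic polynomial is $\det(tI-M)=\sum_{j=0}^n(-1)^je_j(\mu)\,t^{n-j}$, where $\mu=(\mu_1,\dots,\mu_n)$ lists the eigenvalues of $M$ and $e_j$ is the $j$-th elementary symmetric polynomial; because only $k$ of the $\mu_i$ are nonzero, $e_j(\mu)=0$ for $j>k$ while $e_k(\mu)=\prod_{\mu_i\neq0}\mu_i=\Det(M)=\Det(A^2)$. I would then invoke the standard identity that $e_k(\mu)$ equals the sum of the $k\times k$ principal minors of $M$, giving $\Det(A^2)=\sum_{|S|=k}\det(M_{S,S})$, the sum over $k$-element index sets $S$.

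The last step is to expand each principal minor. Writing $A_{\cdot,S}$ for the $n\times k$ submatrix of $A$ formed by the columns indexed by $S$, we have $M_{S,S}=(A^*A)_{S,S}=(A_{\cdot,S})^*A_{\cdot,S}$, so Cauchy--Binet yields $\det(M_{S,S})=\sum_{|T|=k}\abs{\det(A_{T,S})}^2$, now summed over row-index sets $T$. Combining gives $\Det(A^2)=\sum_S\sum_T\abs{\det(A_{T,S})}^2$ over all pairs of $k$-subsets; discarding the vanishing terms and identifying $P$ with the surviving pairs $(T,S)$ produces the stated formula (for real symmetric $A$ one has $\abs{\det(A_P)}^2=\det^2(A_P)$ literally, whereas for complex Hermitian $A$ the symbol $\det^2$ must be read as $\abs{\det}^2$). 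A more conceptual alternative replaces this bookkeeping with the $k$-th exterior power: $\Det(A^2)=\tr\big(\wedge^k(A^2)\big)$ since $\wedge^k(A^2)$ is positive semidefinite with trace $e_k(\mu)$, and because $\wedge^k(A^2)=\big(\wedge^kA\big)^*\big(\wedge^kA\big)$ this trace is $\lVert\wedge^kA\rVert_F^2=\sum_{S,T}\abs{\det(A_{S,T})}^2$.

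I expect the main obstacle to be the bookkeeping around ranks and symmetric functions rather than any deep difficulty. One must be careful that $\operatorname{rank}(A^2)=k$ — which genuinely uses that $A$ is Hermitian, hence normal — so that $e_k(\mu)=\Det(A^2)$ with no contribution from higher symmetric functions, and one must track row versus column index sets correctly through Cauchy--Binet so that $P$ ranges over \emph{all} $k\times k$ minors and not merely the principal ones. The $\det^2$ versus $\abs{\det}^2$ discrepancy in the complex Hermitian case is the only genuinely delicate point and should be flagged explicitly.
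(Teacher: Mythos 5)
Your argument is correct, but note that the paper itself does not prove this theorem: it is quoted from Knill (2014), whose general Cauchy--Binet identity for pseudo determinants, $\Det(F^{T}G)=\sum_{P}\det(F_{P})\det(G_{P})$, gives the statement as the special case $F=G=A$. Your route is therefore a genuinely self-contained alternative that uses only classical tools: the spectral theorem gives $\Det^{2}(A)=\Det(A^{2})$ and $\operatorname{rank}(A^{2})=k$; since exactly $n-k$ eigenvalues of $A^{2}$ vanish, the elementary symmetric function $e_{k}$ of its eigenvalues equals $\Det(A^{2})$ and also equals the sum of the $k\times k$ principal minors of $A^{2}=A^{*}A$; and the textbook Cauchy--Binet formula expands each principal minor $\det\bigl((A_{\cdot,S})^{*}A_{\cdot,S}\bigr)$ into $\sum_{T}\abs{\det(A_{T,S})}^{2}$, so summing over $S$ ranges over all $k\times k$ minors, principal or not (the exterior-power remark is the same computation in different packaging). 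What this buys is a proof of the Pythagorean identity without invoking Knill's extension of Cauchy--Binet to pseudo determinants, at the cost of being tied to the Hermitian (normal) case, whereas Knill's theorem covers arbitrary rectangular factors. Your caveat about $\det^{2}$ versus $\abs{\det}^{2}$ is well taken and is a genuine point about the statement as printed: for complex Hermitian $A$ the non-principal minors can have non-real determinants, and $\sum_{P}\det^{2}(A_{P})$ can even vanish (take the rank-one Hermitian matrix with rows $(1,i)$ and $(-i,1)$, where the four nonzero $1\times1$ minors give $1-1-1+1=0$ while $\Det^{2}(A)=4$), so the identity holds literally for real symmetric $A$ and must be read with $\abs{\det(A_{P})}^{2}$ in the complex case; the same reading applies to the paper's subsequent corollary for the canonical gradient and to Berg's formula for $A^{+}$.
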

As a corollary, the canonical gradient $\can$ is directly obtainable.
\begin{corollary}\label{gradDetCor}
For Hermitian $A$ of rank $k$, one has
\begin{align}\label{gradDet}
    \nabla \Det(A) = \frac{1}{\Det(A)}  \sum_P \det\nolimits^2(A_P)A^{-1}_P = \frac{\sum_P \det^2(A_P)A^{-1}_P}{\sqrt{\sum_P \det^2(A_P)}}:=\can(A) \, .
\end{align}
\end{corollary}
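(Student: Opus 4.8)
The plan is to obtain \eqref{gradDet} by differentiating Knill's Pythagorean identity $\Det^2(A)=\sum_P\det\nolimits^2(A_P)$ along an admissible perturbation of $A$ and matching the outcome against the directional-derivative characterisation \eqref{directderiv2}. A preliminary remark on notation: for a $k\times k$ minor indexed by a row set $R$ and a column set $C$, write $A_P$ for the corresponding $k\times k$ submatrix; the symbol $A_P^{-1}$ in \eqref{gradDet} is to be read as the $n\times n$ matrix carrying the block $(A_P)^{-1}$ in the positions $(C,R)$ and zeros elsewhere. With this convention $\tr(A_P^{-1}B)=\tr\big((A_P)^{-1}B_P\big)$ for every $n\times n$ matrix $B$, and it is this bookkeeping identity that lets the individual padded minors reassemble into a single $n\times n$ matrix.

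First I would fix a Hermitian $B\in M^k_{n\times n}$ with $Ker(B)=Ker(A)=:K$ and note that the perturbed matrix stays on the rank-$k$ stratum: since $A$ and $B$ are Hermitian they both restrict to invertible operators on $K^\perp$, so $A+\tau B$ does too for $\abs{\tau}$ small, while $A+\tau B$ still annihilates $K$; hence $Ker(A+\tau B)=K$ and $\mathrm{rank}(A+\tau B)=k$ for all small $\tau$. This is the one place the hypothesis $Ker(B)=Ker(A)$ is used, and it is exactly what lets Knill's Pythagorean theorem apply to $A+\tau B$ with the \emph{same} $k$. Since index pairs with a vanishing minor contribute nothing, we may sum over all $k\times k$ index pairs and write $\Det^2(A+\tau B)=\sum_P\det\nolimits^2(A_P+\tau B_P)=:g(\tau)$, a polynomial in $\tau$.

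Next I would pass from $g$ to $\Det(A+\tau B)$ itself. Since $\mathrm{rank}(A)=k\geq1$ (the case $k=0$ forces $A=B=0$ and is vacuous), $\Det(A)\neq0$, so $g(0)=\Det^2(A)>0$; and $\tau\mapsto\Det(A+\tau B)$ is continuous near $0$, being a product of the continuously varying, non-vanishing eigenvalues of a matrix of constant rank $k$. Hence $\Det(A+\tau B)$ keeps the sign of $\Det(A)$ and equals that sign times $\sqrt{g(\tau)}$, which is real-analytic near $0$ because $g(0)>0$. Differentiating at $\tau=0$ gives $2\Det(A)\,\nabla_B\Det(A)=g'(0)$; and Jacobi's formula, which gives $\tfrac{d}{d\tau}\det(A_P+\tau B_P)=\det(A_P)\tr\big((A_P)^{-1}B_P\big)$ at $\tau=0$, applied termwise with the $\det(A_P)=0$ terms dropping out, yields $g'(0)=2\sum_P\det\nolimits^2(A_P)\tr\big((A_P)^{-1}B_P\big)$. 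Using the padding identity of the first paragraph, this rearranges to $\nabla_B\Det(A)=\tr\big(B\,\tfrac{1}{\Det(A)}\sum_P\det\nolimits^2(A_P)A_P^{-1}\big)$ for every admissible $B$. Comparing with \eqref{directderiv2} (equivalently \eqref{prop2}) identifies $\tfrac{1}{\Det(A)}\sum_P\det\nolimits^2(A_P)A_P^{-1}$ as a member of the class $\nabla\Det(A)$, which we name $\can(A)$; the second equality in \eqref{gradDet} is then immediate from $\Det(A)=\sqrt{\sum_P\det\nolimits^2(A_P)}$, i.e.\ Knill's theorem with the positive root, as is appropriate when $A$ is positive semidefinite, as in Section 3.

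The step I expect to be the main obstacle is the passage in the third paragraph from the polynomial $g(\tau)=\Det^2(A+\tau B)$ to the differentiability of $\Det(A+\tau B)$ at the origin: one needs the perturbation to remain of rank exactly $k$ (the $K^\perp$ argument above, and the only use of $Ker(B)=Ker(A)$), continuity of $\Det$ on that fixed-rank stratum, and $\Det(A)\neq0$ so that no square-root singularity or sign ambiguity can intervene. The remaining ingredients — Jacobi's formula, discarding the vanishing minors, and the padding bookkeeping of the first paragraph — are routine.
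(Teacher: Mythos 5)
Your argument is correct, but it is not the route the paper takes: the paper never actually differentiates Knill's identity. There, Corollary \ref{gradDetCor} is justified only after the fact, by invoking Berg's formula (Theorem \ref{berg}) to identify the right-hand side of \eqref{gradDet} with $\Det(A)A^+$, checking that $\Det(A)A^+$ trivially satisfies the class equations of Definition \ref{classEq} (hence also \eqref{prop2}), and proving it is the unique member of the class whose kernel equals $Ker(A)$; a second, independent derivation of the same object comes from the eigenvalue-perturbation computation in Theorem \ref{diffTheorem}. You instead work directly from the directional-derivative definition \eqref{directderiv2}: you verify that the admissible perturbations $A+\tau B$ stay on the rank-$k$ stratum, apply Knill's theorem to $A+\tau B$, differentiate $g(\tau)=\sum_P\det^2(A_P+\tau B_P)$ termwise with Jacobi's formula (the $\det(A_P)=0$ terms killed by the chain-rule factor), and reassemble the padded minors into $\tr\big(B\,\can(A)\big)$. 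What your route buys is a self-contained proof that the displayed matrix really is a derivative in the sense of Definition 1, so the statement becomes a genuine corollary of Knill's theorem without appealing to Berg; combined with the paper's uniqueness theorem it would even reprove Berg's formula. What the paper's route buys is brevity and the immediate identification $\can(A)=\Det(A)A^+$ in \eqref{canonicalFormula}. Two points in your write-up are worth keeping explicitly: the check that $Ker(A+\tau B)=Ker(A)$ for small $\tau$ (left implicit in the paper), and the observation that the second equality in \eqref{gradDet} takes the positive square root, i.e.\ it requires $\Det(A)>0$, since in general $\sqrt{\sum_P\det^2(A_P)}=\abs{\Det(A)}$ and the two expressions could differ by a sign.
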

This $\can(A)$ satisfies the class equations as well as Equation \eqref{prop2}. Before proving this claim, we illustrate by revisiting our examples.
\begin{example}
We again consider matrix
\begin{align}
    A = \left( \begin{array}{cc}
    a    & 0  \\
        0 & 0
    \end{array} \right) \, .
\end{align}
This time we use Formula \eqref{gradDet}.  Here, the rank $k$ minors are simply the elements of $A$.  Since only the first element is non-zero, we have
\begin{align}
    \can(A) = \frac{\det^2(A_{11})\,A_{11}^{-1}}{\Det(A)} = \frac{a^2}{a} \left( \begin{array}{cc}
    a^{-1}    & 0  \\
        0 & 0
    \end{array} \right) = \left( \begin{array}{cc}
    1    & 0  \\
        0 & 0
    \end{array} \right)\, .
\end{align}
This, of course, agrees with the original example.
\end{example}

\begin{example}
Again, consider the matrix
\begin{align}
    A = \left( \begin{array}{cc}
    1   & 1  \\
        1 & 1
    \end{array} \right)\, .
\end{align}
The gradient of the pseudo determinant may be found using Formula \eqref{gradDet}:
\begin{align}
    \can(A) = \frac{1}{2} \sum_{ij} \det\nolimits^2(A_{ij}) A_{ij}^{-1} = \frac{1}{2} A \, .
\end{align}
The reader may check that
\begin{align}
    A\can(A)=\frac{1}{2}A^2= A = \dots = \mathbf{\nabla \Det(A)} A \, ,
\end{align}
as expected from Equation \eqref{Ex2}.
\end{example}
The above examples suggest that $\can(A)$ should satisfy the class equations in general.  To show this, we first cite a result.
\begin{theorem}\label{berg}
(Berg 1986 \cite{berg1986three}) The pseudo inverse of a Hermitian, rank $k$ matrix $A$ takes the following form:
\begin{align}
    A^+ = \frac{\sum_P \det^2(A_P)A^{-1}_P}{\Det^2(A)} =\frac{\sum_P \det^2(A_P)A^{-1}_P}{\sum_P \det^2(A_P)}\,.
\end{align}
\end{theorem}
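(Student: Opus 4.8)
The plan is to prove the two displayed equalities separately. The second equality---replacing $\Det^2(A)$ in the denominator by $\sum_P\det\nolimits^2(A_P)$---is immediate from the Pythagorean theorem of Knill (2014) stated above, so the entire content reduces to establishing the single matrix identity
\begin{align}
    \sum_P \det\nolimits^2(A_P)\, A_P^{-1} = \Det^2(A)\, A^+ ,
\end{align}
where each $k\times k$ block $A_P^{-1}$ is understood as embedded back into $M_{n\times n}$ on the index set defining the minor $P=(R,C)$. I would first fix this embedding convention explicitly and then, via Cramer's rule, record that the $(i,j)$ entry of the left-hand side is a signed double sum of products $\det(A_{R,C})\,\det(A_{R\setminus\{\cdot\},\,C\setminus\{\cdot\}})$ pairing a nonzero $k\times k$ minor with a complementary $(k-1)\times(k-1)$ minor.

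The main step is to evaluate this double sum. I would diagonalize the Hermitian matrix as $A=U\Lambda U^*$ with $U$ unitary and $\Lambda=\operatorname{diag}(\lambda_1,\dots,\lambda_k,0,\dots,0)$, equivalently use the full-rank factorization $A=U_0(\Lambda_0 U_0^*)$ with $U_0$ the $n\times k$ block of eigenvectors for the nonzero eigenvalues. The Cauchy--Binet formula (multiplicativity of the $k$-th and $(k-1)$-th compound matrices) then factors every minor through the rank structure of $A$: the $k$-th compound $C_k(A)$ is rank one, $C_k(A)=\Det(A)\,uu^*$ for a unit vector $u$ whose $R$-entry is $\det((U_0)_R)$, while $C_{k-1}(A)$ has rank $k$ with a spectral decomposition indexed by the $(k-1)$-subsets of $\{1,\dots,k\}$. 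Substituting these into the double sum makes it factor as $\Det^2(A)\sum_{m=1}^k \lambda_m^{-1} f^{(m)}_j \overline{f^{(m)}_i}$ for explicit vectors $f^{(m)}$ built from signed products of the minors of $U_0$; note that the denominator identity falls out of the same computation, as $\|C_k(A)\|_F^2=\Det^2(A)$.

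Everything then comes down to a single Laplace-expansion identity: that each vector $f^{(m)}$ is proportional to the conjugate of the $m$-th eigenvector column, i.e. $f^{(m)}_j=\epsilon_m\,\overline{(U_0)_{jm}}$ for some $j$-independent unimodular constant $\epsilon_m$. Granting this, the factored sum collapses to $\Det^2(A)\,U_0\Lambda_0^{-1}U_0^*=\Det^2(A)\,A^+$, which is the desired identity; since the constants enter only through the combination $\abs{\epsilon_m}^2=1$, their exact values never matter. I expect this Laplace identity, together with the careful tracking of signs and complex conjugates through the embedding (for genuinely complex Hermitian $A$ the natural placement of $A_P^{-1}$ transposes the roles of $R$ and $C$, which must be reconciled with the formula as stated), to be the main obstacle; it is a classical generalized-Cramer statement for the pseudo-inverse of a full-column-rank matrix, itself provable by one further application of Cauchy--Binet and cofactor expansion.

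As an alternative route more in keeping with the limiting definitions used earlier in the section, one could instead start from $A^+=\lim_{\delta\to0}(A^2+\delta I)^{-1}A$, expand $(A^2+\delta I)^{-1}$ by its adjugate, extract the coefficient of $\delta^{\,n-k}$ (a sum of principal $(k-1)$-minors of $A^2$), and convert minors of $A^2$ into products of minors of $A$ by Cauchy--Binet. This reaches the same identity but concentrates the bookkeeping in the coefficient extraction rather than in the Laplace identity.
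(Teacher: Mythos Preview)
The paper does not prove this theorem at all: it is stated as a cited result from Berg (1986), introduced with the phrase ``we first cite a result,'' and is used only as a black box to deduce that $\can(A)=\Det(A)A^+$ in the theorem immediately following. There is therefore nothing in the paper to compare your proof against.

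That said, your outline is a reasonable route to an independent proof. The reduction to the identity $\sum_P\det^2(A_P)\,A_P^{-1}=\Det^2(A)\,A^+$ via Knill's Pythagorean theorem is correct, and the compound-matrix/Cauchy--Binet strategy is the natural way to factor general $k\times k$ and $(k-1)\times(k-1)$ minors through the full-rank factorization $A=U_0\Lambda_0 U_0^*$. The step you flag as the ``main obstacle'' is indeed the crux: the Laplace-type identity expressing signed $(k-1)$-minors of the isometry $U_0$ back as (conjugates of) its entries is exactly the content of the Moore--Penrose formula $U_0^+=U_0^*$ read through Cramer's rule, and it does require care with signs, conjugation, and the row/column placement convention for $A_P^{-1}$ when $P=(R,C)$ is a non-principal minor. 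Your alternative limiting approach through $(A^2+\delta I)^{-1}A$ and coefficient extraction is also viable and perhaps closer in spirit to how the paper handles the pseudo determinant itself, but it leads to the same combinatorial bookkeeping in the end.
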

\begin{theorem}
\begin{align}\label{canonicalFormula}
    \can(A)=\Det(A)A^+ \,
\end{align}
 Thus $\can(A)$ satisfies the class equations and belongs to the equivalence class $\nabla \Det(A)$. Moreover, $\can(A)$ is the unique member of the equivalence class that has the same kernel as $A$. In this sense, it may be considered the canonical gradient of the pseudo determinant.
\end{theorem}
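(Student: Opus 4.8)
The plan is to establish the three assertions in turn --- the closed form $\can(A)=\Det(A)A^+$, the membership $\can(A)\in\nabla\Det(A)$, and the uniqueness of $\can(A)$ within that class among matrices sharing the kernel of $A$ --- the first two being essentially bookkeeping with results already in hand and the last carrying the only real content. For the closed form I would combine Corollary \ref{gradDetCor}, which gives $\can(A)=\frac{1}{\Det(A)}\sum_P\det^2(A_P)A_P^{-1}$, with Berg's theorem (Theorem \ref{berg}), which rearranges to $\sum_P\det^2(A_P)A_P^{-1}=\Det^2(A)A^+$; the Pythagorean identity of Knill, $\sum_P\det^2(A_P)=\Det^2(A)\neq0$, is what makes both expressions legitimate, and substitution then gives $\can(A)=\Det^2(A)A^+/\Det(A)=\Det(A)A^+$. (The only wrinkle is the sign convention $\sqrt{\Det^2(A)}=\Det(A)$ behind the second display of Corollary \ref{gradDetCor}; it is harmless for the positive semidefinite matrices of Section 3, and the first display settles the identity unambiguously in any case.) For membership, writing $A=U\Lambda U^*$ in spectral form, $A^+=U\Lambda^+U^*$ is Hermitian of rank $k$ with $Ker(A^+)=Ker(A)$, so $\can(A)=\Det(A)A^+$ is a non-zero element of $M^k_{n\times n}$ with the same kernel as $A$; moreover $A\,\can(A)=\Det(A)AA^+$ and $\can(A)\,A=\Det(A)A^+A$ are exactly the right-hand sides of the class equations of Definition \ref{classEq}, so $\can(A)\in\nabla\Det(A)$ --- in particular the class is non-empty --- and Equation \eqref{prop2} then follows by taking the trace of $B\,\can(A)=\Det(A)BA^+$.

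The substantive part is uniqueness, though it is short. Let $D\in M^k_{n\times n}$ satisfy the class equations with $Ker(D)=Ker(A)$; I want $D=\can(A)$. Decompose $\mathbb{C}^n$ orthogonally as $Ker(A)\oplus R$ with $R=Ker(A)^\perp=\mathrm{Range}(A)$, and note that $AA^+=A^+A=P_R$, the orthogonal projection onto $R$. Both $D$ and $\can(A)=\Det(A)A^+$ vanish on $Ker(A)$. For $v\in R$, the first class equation gives $A(Dv)=\Det(A)P_Rv=\Det(A)v$; since $D$ is Hermitian with $Ker(D)=Ker(A)$ it restricts to a bijection of $R=Ker(D)^\perp$ onto itself, so $Dv\in R$, and applying $A^+$, which inverts $A$ on $R$, gives $Dv=\Det(A)A^+v=\can(A)v$. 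Hence $D$ and $\can(A)$ agree on $Ker(A)$ and on $R$, and therefore on all of $\mathbb{C}^n$, so $D=\can(A)$.

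The main obstacle I anticipate is exactly the step just used: that a Hermitian $D$ with $Ker(D)=Ker(A)$ automatically maps $\mathrm{Range}(A)$ into itself, which is what lets one invert $A$ there and pin $D$ down. It rests on the fact that a Hermitian operator's range is the orthogonal complement of its kernel, so this is where I would be most careful; everything else is formal manipulation with the pseudo inverse. I would finish with a remark noting that, by contrast, dropping the kernel constraint leaves $D$ entirely unconstrained on $Ker(A)$ --- as the $2\times2$ examples above already show --- which is precisely the source of the non-uniqueness of $\nabla\Det(A)$ and the reason $\can(A)$ is the natural distinguished representative.
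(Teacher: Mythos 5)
Your proof is correct and takes essentially the same route as the paper: the identity $\can(A)=\Det(A)A^+$ is obtained by combining Corollary \ref{gradDetCor} with Berg's theorem, membership in the class is immediate, and uniqueness is settled by decomposing $\mathbb{C}^n=Ker(A)\oplus Im(A)$, with agreement trivial on the kernel and forced on the range by the class equations. The only cosmetic difference is on $Im(A)$: the paper computes $(B-\can(A))y=(BA-\can(A)A)A^+y=0$ directly from the second class equation and $y=AA^+y$, while you use the first class equation plus the Hermitian-ness of the competitor (to place $Dv$ in $Im(A)$) before applying $A^+$; both arguments are valid.
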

\begin{proof}
That $\can(A)=\Det(A)A^+$ is a simple result of Corollary \ref{gradDetCor} and Theorem \ref{berg}. As a result, it immediately satisfies the two propositions as well.

We now consider the uniqueness claim.  In general, $A:Ker(A)^{\perp}\rightarrow Im(A)$ is an isomorphism, and $A:Im(A)\rightarrow Ker(A)^{\perp}$ is its inverse. Since $A$ is Hermitian, $Ker(A)\oplus Im(A)=\mathbb{C}^n$, and so $A:Im(A)\rightarrow Im(A)$, $A^+:Im(A)\rightarrow Im(A)$ is the isomorphism pair. Clearly $Ker(A)=Ker(A^+)$, and so $Ker\big(\can(A)\big)=Ker(A)$. 

We proceed by contradiction. Suppose that there exists another matrix $B\neq \can(A)$ satisfying $Ker(A)=Ker(B)$ and
\begin{align}
    AB  &= A\, A^+ \Det(A)  \\ \nonumber
   BA &=  A^+A\, \Det(A)\, .
\end{align}
Since $B\neq A$, there exists at least one element $y\in \mathbb{C}^n$ such that $By\neq \can(A)y$.  Since $\mathbb{C}^n= Im(A) \oplus Ker(A)$, we may consider $y$ in each subspace separately. If $y \in Ker(A)$, then $By=0=\can(A)y$.  Therefore $y$ must be an element of $Im(A)$.  Then, 
\begin{align}
    (B-\can(A))y &= (B-\can(A)) (AA^+) y \\ \nonumber
                      &= (BA - \can(A)A)A^+ y \\ \nonumber
                      &= (A^+A\,\Det(A) - A^+A\,\Det(A))A^{+} y \\ \nonumber
                      &= 0 \, .
\end{align}
Then $By=\can(A)y$, thus establishing a contradiction.
\end{proof}

We round out this section with a few examples demonstrating applications of Formula \eqref{canonicalFormula}.
\begin{example}
Let $A$ be the constant, $n\times n$ matrix satisfying $A_{ij}=1, \forall i,j =1,\dots,n$. Then it is true that 
\begin{align}
    \Det(A) = n\,, \quad \mbox{and} \quad A^+ = \frac{1}{n^2} A \, .
\end{align}
Hence
\begin{align}
    \can(A) = \Det(A) A^+ = \frac{1}{n} A \, .
\end{align}
\end{example}
\begin{example}
Let $A=0$ be the $n\times n$ zero matrix for arbitrary integer $n$. The reader can check that $A^+=A=0$ by observing the four criteria in the definition of the pseudo inverse.  Recall also that $\Det(0)=1$ for any square matrix with no non-zero eigenvalues.  It follows that
\begin{align}
    \can (A) = \Det(A) A^+ = A = 0 \, .
\end{align}
This basic result is more appealing using the shorthand $\can(0)=0$.
\end{example}

\begin{example}
Consider the projection-dilation matrix
\begin{align}
    A = \left( \begin{array}{cc}
    a^2   & ab  \\
        ab & b^2
    \end{array} \right)
\end{align}
that maps a point $v \in \mathbb{R}^2$ onto the line through the origin containing the unit vector $u=(a,b)^T/\sqrt{(a^2+b^2)}$ while scaling by $a^2+b^2$. The reader may check that
\begin{align}
    \Det(A)=a^2+b^2 \,, \quad \mbox{and} \quad A^+ = \frac{1}{(a^2+b^2)^2}A \, .
\end{align}
We thus obtain the intriguing result
\begin{align}
    \can(A) = \frac{1}{a^2+b^2}A = \frac{1}{a^2+b^2} \left( \begin{array}{cc}
    a^2   & ab  \\
        ab & b^2
    \end{array} \right) = \frac{1}{(a,b)(a,b)^T}(a,b)^T(a,b)\, ,
\end{align}
where the last form is meant to make clear that the result is the projection onto the subspace spanned by $(a,b)^T$. 
\end{example}

The previous example touches on graph theory if we let $(a,b)=(\sqrt{c},-\sqrt{c})$.  

\begin{example}

Let $L$ denote the Laplacian $L=D-A$ of a weighted graph, where $A$ is the weighted adjacency matrix having zeros down the diagonal and off-diagonal elements $A_{ij}$ equal to the value associated with the edge connecting nodes $i$ and $j$. The matrix $D$ is diagonal and has elements satisfying $D_{ij}=\sum_iA_{ij}=\sum_jA_{ij}$.

In the special case of a connected, two node graph with edge value $c$, the Laplacian is
\begin{align}
    L =  \left( \begin{array}{cc}
    c   & 0  \\
    0    & c
    \end{array} \right) -  \left( \begin{array}{cc}
    0   & c  \\
    c    & 0
    \end{array} \right)  =  c \cdot \left( \begin{array}{cc}
    1   & -1  \\
    -1    & 1
    \end{array} \right) \, .
\end{align}
Noting that $L$ is a projection-dilation matrix (see prior example), we get
\begin{align}
        \Det(L)=\sqrt{c}^2+(-\sqrt{c})^2=2c \,, \quad \mbox{and} \quad L^+ = \frac{1}{4c^2}L \, ,
\end{align}
and thus, by Formula \eqref{canonicalFormula}, 
\begin{align}
     \can(L) = \frac{2c}{4c^2}L = \frac{1}{2c}L= \frac{1}{2}\left( \begin{array}{cc}
    1   & -1  \\
        -1 & 1
    \end{array} \right).
\end{align}
The last term is half the Laplacian associated to the simple, \emph{unweighted} graph obtained by removing the weight $c$. Hence, $\can(L)$ takes graph connectivity into account but not scale.

\end{example}

\subsection{The matrix differential}
 When obtaining matrix derivatives, it is often easiest to calculate the matrix differential $dA$  and then relate back to the gradient using the formula \cite{magnus1988matrix}
\begin{align}\label{gradTOdiff}
    dh(A)=\tr\big( (dA)\,G\big) \iff \nabla h(A) = G \, .
\end{align}
Combining this identity with directional derivative Formula \eqref{directderiv}, we see that $Ker(dA)$ must equal $Ker(A)$ for the special case of the derivative of the pseudo determinant.  It follows that the matrix differential of the pseudo determinant is
\begin{align}\label{matrixdiff}
    d \Det(A) = \Det(A)\, \tr\big(A^+ (dA)\big) \, ,
\end{align}
where we are implicitly selecting for the canonical gradient $\can(A)$ in order to satisfy $Ker(dA)=Ker(A)$.  Equation \eqref{matrixdiff} may also be derived directly using the spectral decomposition $A=U\Lambda U^*=\sum_{j=1}^k \lambda_j \,u_ju_j^*$ for rank $k$, Hermitian $A$. The differential of an eigenvalue of a Hermitian matrix $A$ may be written in terms of the matrix differential itself \cite{magnus1988matrix}:
\begin{align}\label{eigendiff}
    d\lambda = \tr\big(uu^* \, (dA)\big) \, .
\end{align}
\begin{theorem}\label{diffTheorem}
The matrix differential of the pseudo determinant of Hermitian $A\in M^k_{n\times n}$ is
\begin{align}
    d \Det(A) = \Det(A)\, \tr\big(A^+ (dA)\big) \, .
\end{align}
\end{theorem}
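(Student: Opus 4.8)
The plan is to make rigorous the informal derivation already sketched in the text, working from the spectral decomposition $A = \sum_{j=1}^k \lambda_j\, u_j u_j^*$ of the rank-$k$ Hermitian matrix $A$ together with the eigenvalue differential \eqref{eigendiff}. The first step is to record why admissibility of the direction ($Ker(dA)=Ker(A)$) keeps the rank fixed: since $A$ and $dA$ are Hermitian, $A+\tau\,dA$ is Hermitian, it annihilates $Ker(A)$, and it leaves the invariant subspace $Im(A)=Ker(A)^{\perp}$ stable; because $A$ restricted to $Im(A)$ is invertible, so is $A+\tau\,dA$ restricted to $Im(A)$ for all sufficiently small $\tau$. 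Hence $A+\tau\,dA$ has rank exactly $k$ near $\tau=0$, its nonzero eigenvalues $\lambda_1(\tau),\dots,\lambda_k(\tau)$ depend differentiably on $\tau$ and stay bounded away from $0$, and $\Det(A+\tau\,dA)=\prod_{j=1}^k \lambda_j(\tau)$ with $\Det(A)=\prod_j \lambda_j$. This legitimizes treating $\Det$ locally as the ordinary determinant of the invertible restriction $A|_{Im(A)}$.

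Next I would differentiate this product by the Leibniz (logarithmic) rule, obtaining $d\Det(A)=\Det(A)\sum_{j=1}^k \lambda_j^{-1}\,d\lambda_j$, and substitute \eqref{eigendiff}, $d\lambda_j=\tr\big(u_j u_j^*\,(dA)\big)$. Pulling the scalars inside the trace and using its linearity gives
\begin{align}
    d\Det(A) = \Det(A)\, \tr\Big( \Big( \sum_{j=1}^k \lambda_j^{-1}\, u_j u_j^* \Big)(dA) \Big).
\end{align}
The last step is the identification $\sum_{j=1}^k \lambda_j^{-1}\, u_j u_j^* = A^+$, which is precisely the statement recorded in the definition of the pseudo inverse that for Hermitian matrices $A^+$ inverts the nonzero eigenvalues while keeping the eigenvectors. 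Substituting yields $d\Det(A)=\Det(A)\,\tr\big(A^+(dA)\big)$, matching \eqref{matrixdiff} and, via \eqref{gradTOdiff}, consistent with $\can(A)=\Det(A)A^+$.

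I expect the main point requiring care to be the use of \eqref{eigendiff} when $A$ has repeated nonzero eigenvalues, since then the individual rank-one projectors $u_j u_j^*$ are not canonically determined. This is handled by replacing, for a cluster of coincident eigenvalues $\lambda_j=\mu$, the individual differentials $d\lambda_j$ by the first-order perturbation statement for the sum of the cluster, whose differential is $\tr\big(P_\mu\,(dA)\big)$ with $P_\mu=\sum_{j:\lambda_j=\mu} u_j u_j^*$ the well-defined spectral projection onto the $\mu$-eigenspace; since $\sum_j \lambda_j^{-1} u_j u_j^* = \sum_\mu \mu^{-1} P_\mu$ regardless, the computation is unaffected. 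Alternatively one may argue by continuity from the dense set of admissible directions for which $A+\tau\,dA$ has simple nonzero spectrum. The only other step worth spelling out explicitly is the rank-preservation argument of the first paragraph, which is what licenses the Leibniz step over a genuinely differentiable family of invertible restrictions.
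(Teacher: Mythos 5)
Your proposal follows essentially the same route as the paper's proof: differentiate the product of nonzero eigenvalues via the Leibniz rule, insert the eigenvalue differential $d\lambda_j=\tr\big(u_ju_j^*(dA)\big)$, and identify $\sum_{j=1}^k\lambda_j^{-1}u_ju_j^*$ with $A^+$. The extra care you take over rank preservation along admissible directions and over repeated eigenvalues is sound and fills in details the paper leaves implicit, but it does not change the argument.
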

\begin{proof}
 The result is proven directly using Formula \eqref{eigendiff}.
 \begin{align}
      d \Det(A) &= d \prod_{j=1}^k \lambda_j \\ \nonumber
               &= \sum_{j=1}^k d\lambda_j \, \prod_{i\neq j} \lambda_i \\ \nonumber
               &= \sum_{j=1}^k \tr\big(u_ju_j^*\,(dA)\big) \, \prod_{i\neq j} \lambda_i \\ \nonumber
               &= \sum_{j=1}^k \tr\big(\frac{1}{\lambda_j}u_ju_j^*\,(dA)\big) \, \prod_{i=1}^k \lambda_i \\ \nonumber
               &= \Det(A) \sum_{j=1}^k \tr\big(\frac{1}{\lambda_j}u_ju_j^*\,(dA)\big)  \\ \nonumber
               &= \Det(A) \, \tr\big(\sum_{j=1}^k\frac{1}{\lambda_j}u_ju_j^*\,(dA)\big)  \\ \nonumber
               &= \Det(A) \, \tr\big(A^+(dA)\big)  
 \end{align}
\end{proof}
The reader should note that Theorem \ref{diffTheorem} could also be used to derive the canonical gradient $\can(A)$ via Formula \eqref{gradTOdiff}.

\section{An example from statistics}
We now derive the maximum likelihood estimator (MLE) for the singular covariance of the degenerate multivariate Gaussian distribution. Thus, this section may be considered an extension of the results found in \cite{anderson1985maximum}.  The MLE may be incorporated into more advanced statistical algorithms such as expectation maximization for image segmentation \cite{yang2008unsupervised}. The formulas derived in the following are also potentially useful in a Hamiltonian Monte Carlo algorithm for Bayesian inference over reduced-rank covariance matrices (cf. \cite{holbrook2018geodesic}).

Let $x_1,\dots,x_N$ follow a degenerate Gaussian distribution with mean $\mu$ and singular covariance $\Sigma$. The probability density function of such a random variable $x_i$ is given by
\begin{align}
    f(x_i;\mu,\Sigma) = \Det(2\pi\Sigma)^{-1/2}\exp \big(-\frac{1}{2}(x_i-\mu)^T\Sigma^+(x_i-\mu)\big) \, .
\end{align}
Assuming that $\mu$ is known, the log-likelihood $\ell(\Sigma)$ of $\Sigma$ is proportional to
\begin{align}
  - N \log \big( \Det(\Sigma)\big) - \sum_{i=1}^N(x_i-\mu)^T\Sigma^+(x_i-\mu) = -N \log \big( \Det(\Sigma)\big) - \tr\big( \Sigma^+R\big) \, ,
\end{align}
where $R$ is the matrix of residuals.

To obtain the MLE $\hat{\Sigma}$, we obtain the gradient of $\ell(\Sigma)$ and set it to zero, just as in the case of a full-rank covariance matrix. To calculate the second term in the log-likelihood, we need the formula for the matrix differential of the pseudo inverse \cite{golub1973differentiation}:
\begin{align}
    d\Sigma^+ = -\Sigma^+ (d\Sigma)\Sigma^+ + \Sigma^+\Sigma^+(d\Sigma)(I-\Sigma\Sigma^+)+(I-\Sigma^+\Sigma)(d\Sigma)\Sigma^+\Sigma^+ \, . 
\end{align}
It follows that
\begin{align}
    d \ell(\Sigma) &= - N\tr\big(\Sigma^+ (d\Sigma)\big) + \tr\big(\Sigma^+ (d\Sigma)\Sigma^+R\big) \\ \nonumber
    &\quad - \tr\big(\Sigma^+\Sigma^+(d\Sigma)(I-\Sigma\Sigma^+)R\big) - \tr\big((I-\Sigma^+\Sigma)(d\Sigma)\Sigma^+\Sigma^+R\big) \\ \nonumber 
    &= -N \tr\big(\Sigma^+ (d\Sigma)\big) + \tr\big(\Sigma^+R\Sigma^+(d\Sigma)\big) \\ \nonumber
    &\quad - \tr\big((I-\Sigma\Sigma^+)R\Sigma^+\Sigma^+(d\Sigma)\big) - \tr\big(\Sigma^+\Sigma^+R(I-\Sigma^+\Sigma)(d\Sigma)\big) \, .
\end{align}
Setting $d\ell(\hat{\Sigma})=0$ and applying Formula $\eqref{gradTOdiff}$, we get
\begin{align}
   N \hat{\Sigma}^+ = \hat{\Sigma}^+R\hat{\Sigma}^+ - (I-\hat{\Sigma}\hat{\Sigma}^+)R\hat{\Sigma}^+\hat{\Sigma}^+ - \hat{\Sigma}^+\hat{\Sigma}^+R(I-\hat{\Sigma}^+\hat{\Sigma}) \, ,
\end{align}
and multiplying both sides by $\hat{\Sigma}$ on the right and the left gives
\begin{align}
    N\hat{\Sigma} &= \hat{\Sigma}\hat{\Sigma}^+R\hat{\Sigma}^+\hat{\Sigma}-\hat{\Sigma} (I-\hat{\Sigma}\hat{\Sigma}^+)R\hat{\Sigma}^+\hat{\Sigma}^+ \hat{\Sigma} - \hat{\Sigma} \hat{\Sigma}^+\hat{\Sigma}^+R(I-\hat{\Sigma}^+\hat{\Sigma})\hat{\Sigma}  \\ \nonumber
    &= \hat{\Sigma}\hat{\Sigma}^+R\hat{\Sigma}^+\hat{\Sigma} \, . 
\end{align}
This last line follows because the matrices $\Sigma\Sigma^+$ and $\Sigma^+\Sigma$ are projections onto the range of $\Sigma$ and $\Sigma^+$, and therefore  $(I-\Sigma^+\Sigma)$ and $(I-\Sigma\Sigma^+)$ annihilate $\Sigma$. For the same reason, \emph{if we are willing to assume that} $Ker(R)=Ker(\Sigma)$, this last equation is solved by
\begin{align}
    \hat{\Sigma}=\frac{1}{N}\hat{\Sigma}\hat{\Sigma}^+R\hat{\Sigma}^+\hat{\Sigma} = \frac{1}{N}R \, .
\end{align}
\emph{Thus only with that key assumption are we able to reproduce the classical result for full rank} $\Sigma$. If we are not willing to make this assumption, i.e. if we have prior belief that, or have set up our model in such a way that, the range of $\Sigma$ is a predetermined subspace, then the above equation may be written
\begin{align}
    \hat{\Sigma}=\frac{1}{N}\hat{\Sigma}\hat{\Sigma}^+R\hat{\Sigma}^+\hat{\Sigma}=\hat{\Sigma}=\frac{1}{N}\Sigma\Sigma^+R\Sigma^+\Sigma \,.
\end{align}
Then $\hat{\Sigma}$ is precisely the projection of the residual matrix $R/N$ onto the range of $\Sigma$.
\printbibliography

\end{document}